\documentclass{article}
\usepackage{spconf,amsmath,graphicx,hyperref}
\usepackage{spconf,amsmath,graphicx,bm,amssymb}
\usepackage{amsfonts,amsthm,mathtools}
\usepackage{algorithm}
\usepackage{algorithmic}
\usepackage[caption=false,font=footnotesize]{subfig}
\usepackage{textcomp}
\usepackage{xcolor}
\usepackage{placeins}
\usepackage{balance}
\usepackage{cite}

\newtheorem{theorem}{Theorem}

\newcommand{\C}{\mathbb C}

\newcommand{\INPUT}{\item[\textbf{Input:}]}
\newcommand{\OUTPUT}{\item[\textbf{Output:}]}

\title{A Low-Dimensional Riemannian Alternating Optimization Algorithm for BD-RIS-Assisted Multiuser Beamforming}

\name{Weitao Xia$^{\star}$, Zheyu Wu$^{\S}$, Ya-Feng Liu$^{\star}$
}
\address{$^{\star}$School of Mathematical Sciences, Beijing University of Posts and Telecommunications, Beijing, China \\
	$^{\S}$Department of Electrical and Electronic Engineering, Imperial College London, London, U.K.\\
   Email: wtxia@bupt.edu.cn, zheyu.wu@imperial.ac.uk, yafengliu@bupt.edu.cn
}

\begin{document}

\maketitle
\ninept

\begin{abstract}
	This paper investigates joint active and passive beamforming for a multiuser downlink system assisted by a fully-connected beyond-diagonal reconfigurable intelligent surface (BD-RIS). Since the number of tunable components grows quadratically with the number of RIS elements, directly optimizing the full scattering matrix becomes increasingly computationally expensive. To address this challenge, we first derive an equivalent low-dimensional formulation that replaces the full scattering matrix with a variable whose number of columns equals the number of users. We further establish a smooth manifold structure on a dense subset of the equivalent feasible set, based on which we develop a low-dimensional Riemannian alternating optimization (LD-RAO) algorithm, where the passive variable is updated using a single Riemannian gradient ascent step at each iteration. The convergence of LD-RAO is established under mild assumptions. Numerical results demonstrate that the proposed algorithm achieves the same performance as the existing state-of-the-art (SOTA) full-dimensional methods with significantly less CPU time, particularly for large-scale BD-RIS.
\end{abstract}

\begin{keywords}
Beyond-diagonal reconfigurable intelligent surface, multiuser beamforming, Riemannian optimization.
\end{keywords}

\section{Introduction}

Reconfigurable intelligent surface (RIS) offers an energy-efficient way to shape wireless propagation using nearly passive elements \cite{CONVENTIONAL_RIS, ConventionalRIS2, RISMIMO, LiuRIS}. In a conventional single-connected RIS, each element is connected to an independently tunable load \cite{BDRISNetwork}, yielding a diagonal scattering matrix that cannot redistribute signal energy among different RIS ports. Beyond-diagonal RIS (BD-RIS) generalizes this architecture with reconfigurable inter-element connections, thereby enabling non-diagonal scattering matrices \cite{BDRISNetwork, BDRISArchitecture}, and providing additional degrees of freedom to enhance desired signal power, mitigate interference, and improve coverage \cite{RISTutorial}.

Despite these advantages, the more complicated architecture of BD-RIS introduces new optimization challenges. For an $N$-element fully-connected BD-RIS, the number of tunable components grows quadratically with $N$, leading to a high problem dimension. Moreover, losslessness and reciprocity of BD-RIS require a unitary and symmetric scattering matrix \cite{BDRISNetwork}. These matrix constraints differ fundamentally from the elementwise unit-modulus constraints of conventional RIS and therefore require new optimization strategies tailored to BD-RIS. Accordingly, considerable research effort has been devoted to BD-RIS optimization \cite{RISTutorial, R3, BDRISClosedForm, FPPart1, BDRISUnified, R2, TwoStage, FPPSLA, ZheyuAlgorithm, R1, LossyBDRIS}.

In particular, joint active and passive beamforming for multiuser systems has been addressed using penalty dual decomposition \cite{BDRISUnified}, manifold optimization \cite{R2}, a heuristic Two-Stage method \cite{TwoStage}, projection-based algorithm \cite{FPPSLA}, and an ADMM-based approach \cite{ZheyuAlgorithm}. However, most of these approaches operate directly on the full $N\times N$ scattering matrix and typically involve iterative matrix factorizations or projections, whose computational costs increase rapidly with $N$. To reduce this complexity, the Two-Stage method in \cite{TwoStage} first designs the BD-RIS scattering matrix heuristically and then optimizes the active beamformer with the scattering matrix fixed. Although computationally efficient, this decoupled design does not directly optimize the original sum-rate objective and may therefore incur a performance loss. Hence, it remains challenging to develop an algorithm that achieves high performance while avoiding repeated optimization of the full $N\times N$ scattering matrix. Recently, the work \cite{LowdimProblem} provided a thorough characterization of the effective channel achievable by a fully-connected BD-RIS, showing that, although the effective channel is naturally expressed in terms of the full scattering matrix, its dependence on the BD-RIS can be fully characterized by a lower-dimensional matrix whose column dimension is determined by the number of users, which is typically much smaller than $N$. This characterization provides a promising way to address the aforementioned challenge by enabling BD-RIS optimization to be performed directly over this lower-dimensional representation rather than the full scattering matrix.

Motivated by the above observations, this paper reformulates the joint active and passive beamforming problem in terms of a low-dimensional BD-RIS variable, thereby significantly decreasing the problem dimension. 
However, the constraint structure of the new formulation remains difficult to handle. To address this issue, we introduce auxiliary variables to reformulate the problem with a more tractable constraint.
We then propose a Riemannian alternating optimization algorithm, in which the passive variable is updated via a tailored Riemannian gradient ascent step. The computational complexity of the proposed algorithm scales only linearly with $N$. Numerical results show that its sum-rate performance is nearly identical to that of state-of-the-art (SOTA) methods \cite{TwoStage, FPPSLA, ZheyuAlgorithm} while requiring substantially less computational time. \vspace{-2mm}

\section{Problem Formulation}\label{sec:problemformulation}
We consider a fully-connected BD-RIS-assisted downlink system comprising an $L$-antenna base station
(BS), an $N$-element BD-RIS, and $K$ single-antenna users. 
The direct channels from the BS to users are assumed to be blocked for simplicity. 
Let $\mathbf{W}=[\mathbf{w}_1,\mathbf{w}_2,\ldots,\mathbf{w}_K]\in\mathbb C^{L\times K}$, $\mathbf{G}\in\mathbb C^{N\times L}$, and $\mathbf{H}=[\mathbf{h}_1, \mathbf{h}_2,\ldots,\mathbf{h}_K]\in\mathbb C^{N\times K}$ denote the beamforming matrix, the channel from the BS to the BD-RIS, and the channels from the BD-RIS to the users, respectively.
For an ideal lossless and reciprocal BD-RIS, its scattering matrix
$\mathbf{\Theta}\in\mathbb C^{N\times N}$ satisfies
$\mathbf{\Theta}^H\mathbf{\Theta}=\mathbf{I}_N$ and $\mathbf{\Theta}^T=\mathbf{\Theta}$ \cite{BDRISNetwork}.
With the notations introduced above, the BD-RIS-assisted joint active and passive beamforming problem is formulated as 
\vspace{-1mm}
\begin{equation}
	\begin{aligned}
		\max_{\mathbf{W},\mathbf{\Theta}}~~
		&\sum_{k=1}^{K}\log\left(1+\frac{|\mathbf{h}_k^H\mathbf{\Theta} \mathbf{G}\mathbf{w}_k|^2}
		{\sum_{j\ne k}|\mathbf{h}_k^H\mathbf{\Theta} \mathbf{G}\mathbf{w}_j|^2+\sigma_k^2}\right)\\
		\mathrm{s.t.}~~
		&\mathbf{\Theta}^H\mathbf{\Theta}=\mathbf{I}_N,~~ \mathbf{\Theta}^T=\mathbf{\Theta},~~\|\mathbf{W}\|_F^2\le P_{\mathrm{T}} ,
	\end{aligned}
	\label{p:original_SR_problem}\vspace{-1mm}
\end{equation}
where $\sigma_k^2$  is the power of the additive white Gaussian noise (AWGN) at user $k$ and $P_{\mathrm{T}}$ is the BS transmit-power budget. In practice, $N$ is typically large and satisfies $N\gg K$, making it challenging to solve \eqref{p:original_SR_problem} efficiently. This paper aims to develop a low-complexity algorithm that achieves high sum-rate performance, particularly for large $N$. Throughout the paper, we focus on the large $N$ and assume $N\geq 2K$.

\vspace{-3mm}
\section{Proposed LD-RAO Algorithm}\label{sec:algorithm}
\vspace{-2mm}

In this section, we first establish a low-dimensional reformulation of \eqref{p:original_SR_problem} in Section~\ref{sec:lowdim_reformulation}.  We then present the overall algorithmic framework and the Riemannian gradient ascent update for the passive variable in Sections \ref{sec:algorithm_structure} and \ref{sec:RGA}, respectively. Finally, we  analyze the computational complexity and establish the convergence of the proposed algorithm in Section~\ref{sec:theoretical_analysis}. 
\vspace{-2mm}

\subsection{Low-Dimensional Reformulation}\label{sec:lowdim_reformulation}
The scattering matrix $\mathbf{\Theta}$ affects \eqref{p:original_SR_problem} only
through the effective channels $\mathbf{h}_k^H\mathbf{\Theta} \mathbf{G}$, $k = 1, 2, \ldots, K$. Hence, define \vspace{-1mm}
\begin{equation}
	\mathbf{u}_k=\mathbf{\Theta}^H\mathbf{h}_k,~~
	\mathbf{U}=[\mathbf{u}_1,\mathbf{u}_2,\ldots,\mathbf{u}_K]=\mathbf{\Theta}^H\mathbf{H},
	\label{eq:u_transformation}\vspace{-1mm}
\end{equation}
so that $\mathbf{h}_k^H\mathbf{\Theta} \mathbf{G}\mathbf{w}_j=\mathbf{u}_k^H\mathbf{G}\mathbf{w}_j$. Moreover, according to \cite{LowdimProblem}, the symmetry and unitarity of $\mathbf{\Theta}$ imply that $\mathbf{U}$ satisfies \vspace{-1mm}
\begin{equation}
	\mathbf{U}^H\mathbf{U}=\mathbf{H}^H\mathbf{H},~~ \mathbf{U}^T\mathbf{H}=(\mathbf{U}^T\mathbf{H})^T.
	\label{eq:U_constraints}\vspace{-1mm}
\end{equation}
By substituting \eqref{eq:u_transformation} into \eqref{p:original_SR_problem}, problem \eqref{p:original_SR_problem} can be written as\vspace{-2mm}
\begin{equation}
	\begin{aligned}
		\max_{\mathbf{W},\mathbf{U}}~~ &
		\sum_{k=1}^{K}\log\!\left(
		1+\frac{|\mathbf{u}_k^H\mathbf{G}\mathbf{w}_k|^2}
		{\sum_{j\ne k}|\mathbf{u}_k^H\mathbf{G}\mathbf{w}_j|^2+\sigma_k^2}\right)\\
		\mathrm{s.t.}~~&
		\eqref{eq:U_constraints}
		,~~\|\mathbf{W}\|_F^2\leq P_{\mathrm{T}}.
	\end{aligned}
	\label{eq:P_U}\vspace{-2mm}
\end{equation}
More precisely, let
$\mathcal U_{\mathbf{\Theta}}=\left\{\mathbf{\Theta}^H\mathbf{H}:\mathbf{\Theta}^H\mathbf{\Theta}=\mathbf{I}_N,\mathbf{\Theta}^T=\mathbf{\Theta}\right\}$
and $\mathcal{U}_{\mathrm{c}}$ denote the set characterized by \eqref{eq:U_constraints}. 
The two sets $\mathcal{U}_{\mathrm{c}}$ and $\mathcal{U}_{\mathbf{\Theta}}$ differ only by a measure-zero set \cite[Theorem~2]{LowdimProblem}.
Thus, \eqref{eq:P_U} is almost-everywhere equivalent to the original BD-RIS formulation \eqref{p:original_SR_problem}. 
Comparing \eqref{p:original_SR_problem} and \eqref{eq:P_U}, we see that the $N\times N$ variable $\mathbf{\Theta}$ is replaced with the $N\times K$ variable $\mathbf{U}$. Since $N\gg K$ in practice, \eqref{eq:P_U} substantially reduces the problem dimension. 

The constraints in \eqref{eq:U_constraints} remain challenging to handle directly. Next, we reformulate them into a more tractable form by representing the passive variable in an appropriate orthogonal basis.

Let $\mathbf{H}=\mathbf{P}\mathbf{\Sigma} \mathbf{Q}^H$ be the singular value decomposition (SVD) of $\mathbf{H}$, where $\mathbf{P}\in\C^{N\times K}$, $\mathbf{\Sigma} \in\C^{K\times K}$, and $\mathbf{Q}\in\C^{K\times K}$. Throughout the paper, we assume $\mathbf{H}$ is of full column rank without loss of generality.  
With the SVD of $\mathbf{H}$, the unitary constraint in \eqref{eq:U_constraints} can be written as $\mathbf{\Sigma}^{-1}\mathbf{Q}^H\mathbf{U}^H\mathbf{U}\mathbf{Q}\mathbf{\Sigma}^{-1} = \mathbf{I}_K$. Therefore, by defining $\mathbf{Y}= \mathbf{U}\mathbf{Q}\mathbf{\Sigma}^{-1}\in\C^{N\times K}$, or equivalently, $\mathbf{U}= \mathbf{Y} \mathbf{\Sigma} \mathbf{Q}^H$, the constraints in \eqref{eq:U_constraints} can be written as\vspace{-2mm}
\begin{equation}
\mathbf{Y}^H\mathbf{Y}=\mathbf{I}_K,~~ \mathbf{Y}^T\mathbf{P}=\mathbf{P}^T\mathbf{Y}.
\label{eq:Y_constraint}\vspace{-2mm}
\end{equation}
To express the symmetry constraint in \eqref{eq:Y_constraint} in a standard form, define
\(\mathbf{T}=\mathbf{P}^T\mathbf{Y}\in\mathbb{C}^{K\times K}\), yielding \(\mathbf{T}^T=\mathbf{T}\).
Let \((\cdot)^*\) denote the entrywise complex conjugate. Since \(\mathbf{P}^*\mathbf{T}
=\mathbf{P}^*\mathbf{P}^T\mathbf{Y}\), $\mathbf{T}$ represents the components of
\(\mathbf{Y}\) in \(\operatorname{span}(\mathbf{P}^*)\). 
For the remaining components of \(\mathbf{Y}\), we extend \(\mathbf{P}\) to a
unitary matrix
\(\mathbf{P}_{\mathrm{u}} \triangleq \left[\mathbf{P},\mathbf{P}_{{\mathrm{c}}}\right]\), where
\(\mathbf{P}_{{\mathrm{c}}}\in\mathbb{C}^{N\times m}\) is an orthonormal complement of $\mathbf{P}$ and \(m= N-K\).
Accordingly, every \(\mathbf{Y}\in\mathbb{C}^{N\times K}\) admits the unique
decomposition
\vspace{-2mm}
\begin{equation}
\mathbf{Y}
=\mathbf{P}^*\mathbf{T}
+\mathbf{P}_{\mathrm{c}}^*\mathbf{Z}
=
\mathbf{P}_{\mathrm{u}}^*\mathbf{X},
~~
\mathbf{X}
= \left[\mathbf{T}^T,\mathbf{Z}^T\right]^T,
\label{eq:Y_UcX}\vspace{-2mm}
\end{equation}
where \(\mathbf{Z}\in\mathbb{C}^{m\times K}\) represents the components of \(\mathbf{Y}\) in \(\operatorname{span}(\mathbf{P}_{\mathrm{c}}^*)\).
Hence, the passive variable is parameterized by \(\mathbf{X}\) with respect to the orthonormal basis \(\mathbf{P}_{\mathrm{u}}^*=[\mathbf{P}^*,\mathbf{P}_{\mathrm{c}}^*]\). 

Since $\mathbf{P}_{\mathrm{u}}^*$ is unitary,
the two constraints in \eqref{eq:Y_constraint} become $\mathbf{X}^H\mathbf{X}=\mathbf{I}_K$ and $\mathbf{T}^T=\mathbf{T}$, respectively. 
With the above variable transformation, \eqref{eq:P_U} is equivalently reformulated as\vspace{-1mm}
\begin{equation}
	\begin{aligned}
		\max_{\mathbf{W},\mathbf{X}}~~ &
		\sum_{k=1}^{K}\log\!\left(1+\frac{|\mathbf{u}_k^H\mathbf{G}\mathbf{w}_k|^2}
		{\sum_{j\ne k}|\mathbf{u}_k^H\mathbf{G}\mathbf{w}_j|^2+\sigma_k^2}\right)\\
		\mathrm{s.t.} ~~ &
		\mathbf{X}^H\mathbf{X}=\mathbf{I}_K,~~ \mathbf{T}^T=\mathbf{T},~~ \|\mathbf{W}\|_F^2\le P_{\mathrm{T}},\\
		&\mathbf{U}=\mathbf{P}_{\mathrm{u}}^*\mathbf{X}\mathbf{\Sigma}\mathbf{Q}^H.
	\end{aligned}
	\label{eq:P_X}\vspace{-2mm}
\end{equation}
Thus, the channel-dependent constraints \eqref{eq:U_constraints} are replaced by a semi-unitary constraint on $\mathbf{X}$ and a symmetry constraint on its upper block $\mathbf{T}$.
Combining the preceding equivalent transformations, \eqref{eq:P_X} constitutes an equivalent low-dimensional reformulation of \eqref{p:original_SR_problem}. 
We next focus on solving problem \eqref{eq:P_X}.

\vspace{-2mm}
\subsection{Algorithm Structure}\label{sec:algorithm_structure}
By using the Lagrangian-dual and quadratic
transforms of fractional programming (FP) \cite{FPPart1}, we can equivalently express the sum-rate objective function of \eqref{eq:P_X} as\vspace{-2mm}
\begin{equation}
	\begin{aligned}
		\max_{\bm{\alpha},\bm{\beta}}~~\sum_{k=1}^{K}\Bigg(& \log(1+\alpha_k)
		+2\sqrt{1+\alpha_k}\operatorname{Re}\!\left\{\overline{\beta_k}\mathbf{u}_k^{H}\mathbf{G} \mathbf{w}_k\right\}\\
		& -\alpha_k-|\beta_k|^2\left(\sum_{j=1}^{K}|\mathbf{u}_k^{H}\mathbf{G} \mathbf{w}_j|^2+\sigma_k^2\right) \Bigg),
	\end{aligned}
	\label{eq:fp_reform}\vspace{-2mm}
\end{equation}
where $\bm{\alpha} = \left[\alpha_1,\alpha_2,\ldots,\alpha_K\right]^T\in\mathbb{R}_+^K$ and $\bm{\beta} = \left[\beta_1,\beta_2,\ldots,\beta_K\right]^T\in\mathbb{C}^K$ are the auxiliary variables. 

We apply alternating optimization \cite{LiuSurvey} to maximize \eqref{eq:fp_reform} subject to the constraints in \eqref{eq:P_X}. At each iteration, the variables are updated in the order 
\[
(\bm{\alpha},\bm{\beta})\rightarrow \mathbf{W}\rightarrow (\bm{\alpha},\bm{\beta})\rightarrow \mathbf{X}.
\]
The auxiliary variables \((\bm{\alpha},\bm{\beta})\) are updated twice to ensure the surrogate function \eqref{eq:fp_reform} accurately reflects the objective function of \eqref{eq:P_X} in both objective value and first-order derivatives with respect to \(\mathbf W\) and \(\mathbf X\). In particular, we refresh the auxiliary variables after updating $\mathbf{W}$ to ensure the monotonic ascent property required by the convergence analysis. 
The updates of \(\bm{\alpha}\), \(\bm{\beta}\), and \(\mathbf{W}\) follow the procedures in \cite[Section IV-C]{FPPart1} and are therefore omitted for brevity. We next focus on the update of \(\mathbf{X}\).
\vspace{-3mm}
 
\subsection{Riemannian Gradient Ascent Update for the $\mathbf{X}$-Subproblem}
\label{sec:RGA}

With $(\bm{\alpha},\bm{\beta},\mathbf{W})$ fixed, the $\mathbf{X}$-subproblem takes the following form:\vspace{-1mm}
\begin{equation}
\hspace{-2mm}
	\begin{aligned}
		\max_{\mathbf{X}}~~ &
		f(\mathbf{X})\triangleq 2\operatorname{Re}\left\{\operatorname{tr}\!\left(\mathbf{X}^H\mathbf{D}\right)\right\}
		-\operatorname{tr}\!\left((\mathbf{A}^H\mathbf{X})\mathbf{S}(\mathbf{A}^H\mathbf{X})^H\right)\\
		\mathrm{s.t.}~~ & \mathbf{X}^H\mathbf{X}=\mathbf{I}_K,~~ \mathbf{T}^T=\mathbf{T},
	\end{aligned}
	\hspace{-2mm}
	\label{eq:X_sub_1}\vspace{-1mm}
\end{equation}
where $\mathbf{A}=\mathbf{P}_{\mathrm{u}}^T\mathbf{G}\mathbf{W}$,
$\mathbf{D}=\mathbf{A}\mathbf{\Sigma}_1\mathbf{Q}\mathbf{\Sigma}^H$,
$\mathbf{S}=\mathbf{\Sigma}\mathbf{Q}^H\mathbf{\Sigma}_2\mathbf{Q}\mathbf{\Sigma}^H$,
$\mathbf{\Sigma}_1=\operatorname{diag}\left(\overline{\beta_1}\sqrt{1+\alpha_1},\overline{\beta_2}\sqrt{1+\alpha_2},\ldots,\overline{\beta_K}\sqrt{1+\alpha_K}\right)$, and 
$\mathbf{\Sigma}_2=\operatorname{diag}\left(|\beta_1|^2,|\beta_2|^2,\ldots,|\beta_K|^2\right)$. 

In the rest of this subsection, we derive a Riemannian gradient ascent (RGA) update for \eqref{eq:X_sub_1}. To reduce the computational complexity, we perform only one RGA update per outer iteration, with the step size selected to satisfy the sufficient ascent condition. We next characterize the constraint manifold and derive the Riemannian gradient, step size, and retraction, which are important components of the proposed RGA. \vspace{-2mm}

\subsubsection{Constraint Manifold}
\label{part:manifold}

Let $\mathcal{F}$ denote the feasible set of problem \eqref{eq:X_sub_1}. To establish a smooth manifold structure suitable for Riemannian optimization, we consider the subset
\begin{equation*}
	\mathcal{F}^{\circ}\triangleq
	\left\{
	\mathbf{X}=\begin{bmatrix}\mathbf{T}\\\mathbf{Z}\end{bmatrix}:
	\mathbf{X}^H\mathbf{X}=\mathbf{I}_K,~~\mathbf{T}^T=\mathbf{T},~~\mathbf{I}_K-\mathbf{T}^H\mathbf{T}\succ \mathbf{0}
	\right\}.
\end{equation*}
It is apparent that $\overline{\mathcal{F}^{\circ}}=\mathcal{F}$, so we have\vspace{-2mm}
\begin{equation*}
	\max_{\mathbf{X}\in\mathcal{F}}f(\mathbf{X})
	=
	\sup_{\mathbf{X}\in\mathcal{F}^{\circ}}f(\mathbf{X}).
	\vspace{-2mm}
\end{equation*}
Therefore, without losing optimality, the proposed RGA is executed on $\mathcal{F}^{\circ}$. 
We next prove that $\mathcal{F}^{\circ}$ is a smooth manifold.
\begin{theorem}
	For $m\ge K$, $\mathcal F^{\circ}$ is a nonempty smooth embedded manifold
	of real dimension $2mK+K$, with the tangent space
	\begin{equation}
		T_{\mathbf{X}}\mathcal F^{\circ}=
		\left\{
		\bm{\eta}=\begin{bmatrix}\bm{\eta}_{\mathrm{T}} \\ \bm{\eta}_{\mathrm{Z}}\end{bmatrix}:
		\bm{\eta}_{\mathrm{T}}^T=\bm{\eta}_{\mathrm{T}},~~ \mathbf{X}^H\bm{\eta}+\bm{\eta}^H\mathbf{X}=\mathbf{0}
		\right\}.
		\label{eq:tangent_compact}
	\end{equation}
	\label{thm:manifold}
\end{theorem}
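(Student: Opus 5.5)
The plan is to realize $\mathcal F^{\circ}$ as a regular level set inside an open subset of a real vector space, and then apply the regular value (submersion) theorem. Let
\[
\mathcal V\triangleq\left\{\mathbf X=\begin{bmatrix}\mathbf T\\ \mathbf Z\end{bmatrix}:\mathbf T\in\C^{K\times K},~\mathbf T^T=\mathbf T,~\mathbf Z\in\C^{m\times K}\right\}.
\]
This is a real vector space of dimension $K(K+1)+2mK$, since complex symmetric $K\times K$ matrices have $K(K+1)/2$ free complex entries. Inside $\mathcal V$, consider
\[
\mathcal O\triangleq\{\mathbf X\in\mathcal V:\mathbf I_K-\mathbf T^H\mathbf T\succ\mathbf 0\}.
\]
It is open, because positive definiteness is an open condition and $\mathbf T\mapsto \mathbf I_K-\mathbf T^H\mathbf T$ is continuous. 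Define the smooth map $F:\mathcal O\to\mathcal H_K$, $F(\mathbf X)=\mathbf X^H\mathbf X-\mathbf I_K$, where $\mathcal H_K$ is the real vector space of $K\times K$ Hermitian matrices, of real dimension $K^2$. Then $\mathcal F^{\circ}=F^{-1}(\mathbf 0)$. Once we show $\mathbf 0$ is a regular value, $\mathcal F^{\circ}$ is an embedded submanifold of $\mathcal O$, hence of $\C^{(K+m)\times K}$, of dimension $K(K+1)+2mK-K^2=2mK+K$.

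The key step is surjectivity of the differential $\mathrm DF(\mathbf X)[\bm\eta]=\mathbf X^H\bm\eta+\bm\eta^H\mathbf X$ on $\mathcal V$ at every $\mathbf X\in\mathcal F^{\circ}$. This is where the extra condition $\mathbf I_K-\mathbf T^H\mathbf T\succ\mathbf 0$ is used. From $\mathbf X^H\mathbf X=\mathbf T^H\mathbf T+\mathbf Z^H\mathbf Z=\mathbf I_K$ we get $\mathbf Z^H\mathbf Z=\mathbf I_K-\mathbf T^H\mathbf T\succ\mathbf 0$, so $\mathbf Z^H\mathbf Z$ is invertible. Given any $\mathbf S\in\mathcal H_K$, set
\[
\bm\eta_{\mathrm T}=\mathbf 0,\qquad \bm\eta_{\mathrm Z}=\tfrac12\mathbf Z(\mathbf Z^H\mathbf Z)^{-1}\mathbf S .
\]
This $\bm\eta$ lies in $\mathcal V$ trivially, since $\bm\eta_{\mathrm T}=\mathbf 0$ is symmetric. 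It satisfies $\mathbf X^H\bm\eta=\mathbf Z^H\bm\eta_{\mathrm Z}=\mathbf S/2$, and therefore $\mathrm DF(\mathbf X)[\bm\eta]=\mathbf S$.

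This is the step I expected to be the obstacle: the symmetry constraint on $\mathbf T$ restricts which directions are allowed, and on the boundary, where $\mathbf Z^H\mathbf Z$ is singular, surjectivity can fail. Routing all the correction through the $\mathbf Z$-block avoids both problems, and is exactly why the manifold is taken on $\mathcal F^{\circ}$ rather than on $\mathcal F$.

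The tangent space is then $\ker \mathrm DF(\mathbf X)$ restricted to $\mathcal V$. This is precisely the set \eqref{eq:tangent_compact}: symmetry of $\bm\eta_{\mathrm T}$ encodes membership in $\mathcal V$, and $\mathbf X^H\bm\eta+\bm\eta^H\mathbf X=\mathbf 0$ is the kernel condition.

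For nonemptiness, use $m\ge K$. Take $\mathbf T=\mathbf 0$ and $\mathbf Z=[\mathbf I_K,\mathbf 0]^T\in\C^{m\times K}$. Then $\mathbf X^H\mathbf X=\mathbf I_K$, $\mathbf T$ is symmetric, and $\mathbf I_K-\mathbf T^H\mathbf T=\mathbf I_K\succ\mathbf 0$, so $\mathbf X\in\mathcal F^{\circ}$.
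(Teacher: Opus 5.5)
Your proposal is correct and follows essentially the same route as the paper. Both apply the regular level set theorem to $F(\mathbf X)=\mathbf X^H\mathbf X-\mathbf I_K$ on the open set of stacked matrices with symmetric $\mathbf T$ satisfying $\mathbf I_K-\mathbf T^H\mathbf T\succ\mathbf 0$, and both prove surjectivity with the same choice $\bm\eta_{\mathrm T}=\mathbf 0$, $\bm\eta_{\mathrm Z}=\tfrac12\mathbf Z(\mathbf Z^H\mathbf Z)^{-1}\mathbf S$. You additionally give an explicit point of $\mathcal F^{\circ}$, which shows where $m\ge K$ is used; the paper's proof leaves this implicit.
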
 \vspace{-8mm}
\begin{proof}
We prove this theorem using the regular level set theorem\cite[Corollary 5.14]{ManifoldTextbook}. 

Define
$\mathcal L^{\circ}=
\left\{
\mathbf{X}=\begin{bmatrix}\mathbf{T} \\ \mathbf{Z}\end{bmatrix}:
\mathbf{T}^T=\mathbf{T},~~ \mathbf{I}_K-\mathbf{T}^H\mathbf{T}\succ \mathbf{0}
\right\}$. It is simple to check that $\mathcal{L}^\circ$ is a smooth manifold of real dimension $K(K+1)+2mK$. 
Define
\[
F:\mathcal L^\circ\rightarrow\operatorname{Herm}(K),
~~
F(\mathbf{X})=\mathbf{X}^H\mathbf{X}-\mathbf{I}_K,
\] 
where $\operatorname{Herm}(K) = \left\{\mathbf{A}\in\mathbb{C}^{K\times K}: \mathbf{A}^H=\mathbf{A}\right\}$ is also a smooth manifold with real dimension $K^2$. 
Thus, $\mathcal F^{\circ}=F^{-1}(\mathbf{0})$. We next prove that $\mathbf{0}$ is a regular value of $F$. 
It suffices to prove that, for every $\mathbf{X}\in\mathcal{F}^{\circ}$, the differential 
\[
DF(\mathbf{X}): T_{\mathbf X}\mathcal{L}^{\circ}\rightarrow T_{F(\mathbf{X})}\operatorname{Herm}(K),\]
\[DF(\mathbf{X})[\bm{\eta}]=\mathbf{X}^H\bm{\eta}+\bm{\eta}^H\mathbf{X},
\] 
is surjective, where $T_{\mathbf X}\mathcal{L}^{\circ} = \left\{
\bm{\eta}=\begin{bmatrix}\bm{\eta}_{\mathrm{T}} \\ \bm{\eta}_{\mathrm{Z}}\end{bmatrix}:
\bm{\eta}_{\mathrm{T}}^T=\bm{\eta}_{\mathrm{T}}\right\}$ 
and $T_{F(\mathbf{X})}\operatorname{Herm}(K) = \operatorname{Herm}(K)$ \cite[Proposition 3.13]{ManifoldTextbook}.

At any fixed
$\mathbf{X}\in\mathcal F^{\circ}$, $\mathbf{Z}^H\mathbf{Z}=\mathbf{I}_K-\mathbf{T}^H\mathbf{T}\succ \mathbf{0}$. For any $\mathbf{H}_0\in\operatorname{Herm}(K)$, choose
$\bm{\eta}_{\mathrm{T}}=\mathbf{0}$,
$\bm{\eta}_{\mathrm{Z}}=\tfrac12 \mathbf{Z}(\mathbf{Z}^H\mathbf{Z})^{-1}\mathbf{H}_0$. Then $\bm{\eta} = \left[\bm{\eta}_{\mathrm{T}}^T, \bm{\eta}_{\mathrm{Z}}^T\right]^T$
gives $DF(\mathbf{X})[\bm{\eta}]=\mathbf{H}_0$. Therefore, $DF(\mathbf{X})$ is surjective, 
and the desired results follow.\vspace{-2mm}
\end{proof}

\subsubsection{Riemannian Gradient}
\label{part:riemannian_gradient}

Under the real Frobenius metric, the Euclidean gradient is $\mathbf{G}_{\mathrm{E}}=2\left(\mathbf{D}-\mathbf{A}(\mathbf{A}^H\mathbf{X})\mathbf{S}\right)$.
The Riemannian gradient $\mathbf{\Xi}$ is obtained by projecting the Euclidean gradient $\mathbf{G}_{\mathrm{E}}$ onto $T_{\mathbf{X}}\mathcal F^{\circ}$. Specifically, we first symmetrize the upper block of $\mathbf{G}_{\mathrm{E}}$ and then remove the corresponding normal component associated with the constraint $\mathbf{X}^H\boldsymbol{\eta}+\boldsymbol{\eta}^H\mathbf{X}=\mathbf{0}$ while maintaining the upper-block symmetry.

First, to guarantee the symmetry of the upper block, we define the projection operator $\Pi_{\mathrm{s}}(\cdot)$ as
\[
\Pi_{\mathrm{s}}\!\left(
\begin{bmatrix}
\mathbf{A}_{\mathrm{T}}\\
\mathbf{A}_{\mathrm{Z}}
\end{bmatrix}
\right)
=
\begin{bmatrix}
(\mathbf{A}_{\mathrm{T}}+\mathbf{A}_{\mathrm{T}}^T)/2\\
\mathbf{A}_{\mathrm{Z}}
\end{bmatrix},~~\mathbf{A}_{\mathrm{T}}\in\mathbb{C}^{K\times K},~~\mathbf{A}_{\mathrm{Z}}\in\mathbb{C}^{m\times K}.
\]
Applying this projection to the Euclidean gradient yields \(\mathbf{G}_{\mathrm{s}}=\Pi_{\mathrm{s}}(\mathbf{G}_{\mathrm{E}})\), which satisfies the upper-block symmetry constraint. Next, we impose the orthogonality condition $\mathbf{X}^H\boldsymbol{\eta}+\boldsymbol{\eta}^H\mathbf{X}=\mathbf{0}$, whose corresponding normal direction is represented by $\mathbf{X}\mathbf{\Lambda}$ with $\mathbf{\Lambda}=\mathbf{\Lambda}^H$. To maintain the upper-block symmetry while removing the normal component, we project $\mathbf{X}\mathbf{\Lambda}$ using $\Pi_{\mathrm{s}}(\cdot)$ before subtracting it from $\mathbf{G}_{\mathrm{s}}$. Hence the Riemannian gradient $\mathbf{\Xi}$ can be written as
\begin{equation}
	\mathbf{\Xi}
=
\mathbf{G}_{\mathrm{s}}-\Pi_{\mathrm{s}}(\mathbf{X}\mathbf{\Lambda}).
\label{eq:rgrad_compact}
\end{equation}
We next determine $\mathbf{\Lambda}$ by using the constraints in \eqref{eq:tangent_compact}.

The upper block of $\mathbf\Xi$ is symmetric by construction. Enforcing $\mathbf X^H\mathbf\Xi+\mathbf\Xi^H\mathbf X=\mathbf{0}$ yields the following linear equation \eqref{eq:multiplier_equation} for $\mathbf\Lambda$:\vspace{-1mm}
\begin{equation}
\Pi_{\mathrm{H}}\!\left(\mathbf{X}^H\Pi_{\mathrm{s}}(\mathbf{X}\mathbf{\Lambda})\right)
=\Pi_{\mathrm{H}}(\mathbf{X}^H\mathbf{G}_{\mathrm{s}})\triangleq \mathbf{B},
\label{eq:multiplier_equation}\vspace{-2mm}
\end{equation}
where $\Pi_{\mathrm{H}}(\mathbf{A}) = (\mathbf{A}^H+\mathbf{A})/2$. To solve \eqref{eq:multiplier_equation} in closed form, we apply the Takagi factorization
$\mathbf{T}=\mathbf{V}\mathbf{C}\mathbf{V}^T$, where $\mathbf{C}=\operatorname{diag}(c_1,c_2,\ldots,c_K)$, and express $\mathbf{\Lambda}$ as 
$\mathbf{\Lambda}=\mathbf{V}^*\mathbf{M}\mathbf{V}^T$, where $\mathbf{M}^H=\mathbf{M}$. Defining $\widetilde{\mathbf{B}}=\mathbf{V}^T\mathbf{B}\mathbf{V}^*$, we decouple \eqref{eq:multiplier_equation} entrywise, giving\vspace{-2mm}
\begin{equation}
\mathbf{M}_{ij}=
\frac{\operatorname{Re}(\widetilde{\mathbf{B}}_{ij})}{1-(c_i-c_j)^2/4}
+\mathrm{i}\frac{\operatorname{Im}(\widetilde{\mathbf{B}}_{ij})}{1-(c_i+c_j)^2/4},
\label{eq:projection_M}\vspace{-1mm}
\end{equation}
where $\mathrm{i}$ is the imaginary unit. Since \(0\leq c_i<1\), all denominators in \eqref{eq:projection_M} are positive, so \eqref{eq:projection_M} uniquely determines $\mathbf{M}$ and subsequently yields $\mathbf{\Lambda}$, which in turn determines $\mathbf{\Xi}$. Therefore, the resulting $\mathbf{\Xi}$ is precisely the orthogonal projection of $\mathbf{G}_{\mathrm{E}}$ onto the tangent space $T_{\mathbf{X}}\mathcal{F}^{\circ}$. Moreover, \(\langle\mathbf G_{\mathrm E},\mathbf\Xi\rangle=\|\mathbf\Xi\|_F^2\geq 0\), so \(\mathbf\Xi\) is an ascent direction whenever it is nonzero.

\vspace{-2mm}

\subsubsection{Step Size}
\label{part:step_retraction}

Let $\tau$ denote the step size. To guarantee the convergence, we impose the following sufficient ascent condition on $\tau$:\vspace{-1mm}
\begin{equation}
	f(\mathcal R_{\mathbf{X}}(\tau\mathbf{\Xi}))
	\ge f(\mathbf{X})+\delta \tau\|\mathbf{\Xi}\|_F^2,
	\label{eq:armijo_compact}\vspace{-1mm}
\end{equation}
where $\mathcal R_{\mathbf{X}}(\cdot) $ is the retraction defined in \eqref{eq:retraction_compact} and $0<\delta<1$. We apply the Armijo line search to find a step size satisfying \eqref{eq:armijo_compact}. Next, we discuss the initial step size $\widetilde{\tau}$. 

Substituting $\mathbf{X}+\tau\mathbf{\Xi}$ into $f(\mathbf{X})$ yields \vspace{-1mm}
\begin{equation}
		f(\mathbf{X}+\tau\mathbf{\Xi})=f(\mathbf{X})+\tau\|\mathbf{\Xi}\|_F^2-\tau^2\kappa,
	\label{eq:line_quadratic_compact}\vspace{-1mm}
\end{equation}
where $\kappa=\operatorname{tr}\!\left((\mathbf{A}^H\mathbf{\Xi})\mathbf{S}(\mathbf{A}^H\mathbf{\Xi})^H\right)\geq 0$. When $\kappa>0$, the quadratic function of $\tau$ in \eqref{eq:line_quadratic_compact} admits the unconstrained maximizer
\(\tau_{\rm quad}=\|\mathbf{\Xi}\|_F^2/(2\kappa)\). Meanwhile, to ensure that the retraction remains feasible, we introduce the retraction-domain bound $\tau_{\rm dom}=(1-\|\mathbf{T}\|_2)/\|\mathbf{\Xi}_{\mathrm{T}}\|_2$ with $\mathbf{\Xi} = \left[\mathbf{\Xi}_{\mathrm{T}}^T, \mathbf{\Xi}_{\mathrm{Z}}^T\right]^T$. Each of $\tau_{\rm quad}$ and $\tau_{\rm dom}$ is set to $+\infty$ when its respective denominator vanishes. We further introduce $0<\tau_{\max}<\infty$ to avoid excessively large and numerically unstable trial steps. Accordingly, the initial trial step size for Armijo backtracking is chosen as\vspace{-1mm}
\begin{equation}
		\widetilde{\tau}=\gamma\min\{\tau_{\rm quad},\tau_{\rm dom}, \tau_{\max}\},
	\label{eq:step_compact}\vspace{-1mm}
\end{equation} 
where $0 <\gamma <1$ is the damping parameter. Armijo backtracking selects the smallest nonnegative integer $j$ such that $\tau=\rho^j\widetilde{\tau}$ satisfies the sufficient ascent condition \eqref{eq:armijo_compact}, where $0<\rho<1$.

\vspace{-2mm}
\subsubsection{Retraction}\label{sec:retraction}

Once the step size $\tau$ is determined, we need to retract the tentative point $\widehat{\mathbf{X}} \triangleq \mathbf{X} + \tau\mathbf{\Xi}$ back to the manifold $\mathcal{F}^{\circ}$. Denote $\widehat{\mathbf{X}} = \left[ \widehat{\mathbf{T}}^T, \widehat{\mathbf{Z}}^T \right]^T$ and define $\mathbf{R}=(\mathbf{I}_K-\widehat{\mathbf{T}}^H\widehat{\mathbf{T}})^{1/2}.$
The retraction is\vspace{-3mm}
\begin{equation}
\mathcal R_{\mathbf{X}}(\tau\mathbf{\Xi})=
\begin{bmatrix}
\widehat{\mathbf{T}}\\
\widehat{\mathbf{Z}}\mathbf{R}\left(\mathbf{R}\widehat{\mathbf{Z}}^H\widehat{\mathbf{Z}}\mathbf{R}\right)^{-1/2}\mathbf{R}
\end{bmatrix}.
\label{eq:retraction_compact}\vspace{-1mm}
\end{equation}
Here we provide a brief proof that \eqref{eq:retraction_compact} defines a retraction. 
Since \(\tau<\tau_{\rm dom}\), \(\mathbf R\) is positive definite. In addition, because \(\mathbf\Xi\) is a tangent vector at \(\mathbf X\), it is simple to check that \(\widehat{\mathbf Z}\) has full column rank. Therefore, \(\mathcal R_{\mathbf X}\) is a well-defined, smooth map into \(\mathcal F^\circ\). Moreover, one can verify that \(\mathcal R_{\mathbf X}(\tau\mathbf\Xi)=\mathbf X+\tau\mathbf\Xi+O(\tau^2)\) and hence $\mathcal{R}_{\mathbf{X}}$ satisfies the two defining properties of a retraction. Details are omitted for brevity.

To summarize, at each RGA update, we compute the Riemannian gradient $\mathbf{\Xi}$ using \eqref{eq:rgrad_compact}, apply Armijo line search with the initial trial step \eqref{eq:step_compact} to select a step size $\tau$ satisfying \eqref{eq:armijo_compact}, and retract the tentative point $\widehat{\mathbf{X}} = \mathbf{X}+\tau\mathbf{\Xi}$ onto $\mathcal F^{\circ}$ through \eqref{eq:retraction_compact}. The overall LD-RAO algorithm is summarized in Algorithm~\ref{alg:workflow}.

\begin{algorithm}[t]
\caption{LD-RAO with one RGA update}
\label{alg:workflow}
\begin{algorithmic}[1]
\INPUT $\mathbf{W}^{(0)}$, $\mathbf{X}^{(0)}\in\mathcal{F}^{\circ}$, $\gamma,\rho,\delta,\tau_{\max}$.
\FOR{$t=0,1,\ldots$}
  \STATE Update $\bm{\alpha}^{(t,0)},\bm{\beta}^{(t,0)}$, $\mathbf{W}^{(t+1)}$.
  \STATE Refresh $\bm{\alpha}^{(t,1)},\bm{\beta}^{(t,1)}$ with $\mathbf{W}^{(t+1)}$.
  \STATE Form $\mathbf{A}^{(t)},\mathbf{D}^{(t)},\mathbf{S}^{(t)}$ using the definition following \eqref{eq:X_sub_1}.
  \STATE Compute $\mathbf{\Xi}^{(t)}$ by \eqref{eq:rgrad_compact}.
  \IF{$\|\mathbf{\Xi}^{(t)}\|_F=0$}
    \STATE $\mathbf{X}^{(t+1)}\leftarrow \mathbf{X}^{(t)}$.
  \ELSE
    \STATE Compute $\widetilde{\tau}^{(t)}$ from \eqref{eq:step_compact}; set $j\leftarrow0$.
    \REPEAT
      \STATE $\tau\leftarrow\rho^j\widetilde{\tau}^{(t)}$,
             $\mathbf{X}_{\rm try}\leftarrow\mathcal R_{\mathbf{X}^{(t)}}(\tau\mathbf{\Xi}^{(t)})$ by
             \eqref{eq:retraction_compact}.
             \STATE $j\leftarrow j+1$.
    \UNTIL{the sufficient ascent condition \eqref{eq:armijo_compact} holds}
    \STATE $\mathbf{X}^{(t+1)}\leftarrow \mathbf{X}_{\rm try}$.
  \ENDIF
  \IF{the stopping test holds}
    \STATE \textbf{break}
  \ENDIF
\ENDFOR
\STATE Recover $\mathbf{\Theta} = \mathbf{P}\left(\mathbf{X}^{(t+1)}\right)^H\mathbf{P}_{\mathrm{u}}^T$.
\OUTPUT $\mathbf{W}^{(t+1)},\mathbf{\Theta}$.
\end{algorithmic}
\end{algorithm}

\begin{table}[t]
	\centering\vspace{-5mm}
	\caption{Computational Complexity Comparison.}
	\label{tab:complexity}
	\renewcommand{\arraystretch}{1.2}
	\resizebox{\columnwidth}{!}{%
		\begin{tabular}{ccc}
			\hline
			\textbf{Algorithm} & \textbf{Reference} &
			\textbf{Computational Complexity} \\
			\hline\hline
			LD-RAO
			& Proposed
			& $\mathcal{O}\!\left(
			2I_{\mathrm{out}}\left(NK^2+K^3\right)\right)$ \\
			FP-PSLA
			& \cite{FPPSLA}
			& $\mathcal{O}\!\left(
			I_{\mathrm{out}}\left(I_{\mathbf{\Theta}}N^3+I_{\mathbf{W}}L^2K\right)\right)$ \\
			pp-ADMM
			& \cite{ZheyuAlgorithm}
			& $\mathcal{O}\!\left(I_{\rm out}\left(N^3K^3\right)
			\right)$ \\
			\hline
		\end{tabular}
	}\vspace{-5mm}
\end{table}

\vspace{-4mm}

\subsection{Theoretical Analysis}\label{sec:theoretical_analysis}\vspace{-1mm}
In this subsection, we present the convergence properties of the proposed algorithm and analyze its computational complexity. Specifically, we can show, under mild assumptions, that any limit point of the sequence generated by Algorithm~\ref{alg:workflow} is a stationary point of problem \eqref{eq:fp_reform}, which is equivalent to problem \eqref{eq:P_X}. The convergence result is primarily based on the sufficient ascent achieved after the update of each variable block. The detailed proof is omitted due to space limitations. 

Regarding computational complexity, Table~\ref{tab:complexity} summarizes the computational complexities of the proposed algorithm, FP-PSLA \cite{FPPSLA}, and pp-ADMM \cite{ZheyuAlgorithm}, while the Two-Stage method is omitted because of its relatively low sum-rate performance. Here, $I_{\mathrm{out}}$, $I_{\mathbf{\Theta}}$, and $I_{\mathbf{W}}$ denote the number of outer iterations, the number of passive inner-loop iterations, and the number of active inner-loop iterations, respectively. It can be observed that the computational complexity of LD-RAO scales linearly with $N$, whereas the dominant computational costs of FP-PSLA and pp-ADMM scale cubically with $N$, demonstrating the low-complexity nature of the proposed algorithm.

\vspace{-2mm}
\section{Numerical Results}\label{sec:numerical}\vspace{-2mm}

In this section, we present simulation results to demonstrate the effectiveness of the proposed algorithm, particularly for large $N$. 

\begin{figure}[t]
	\centering
	\includegraphics[width=0.42\textwidth]{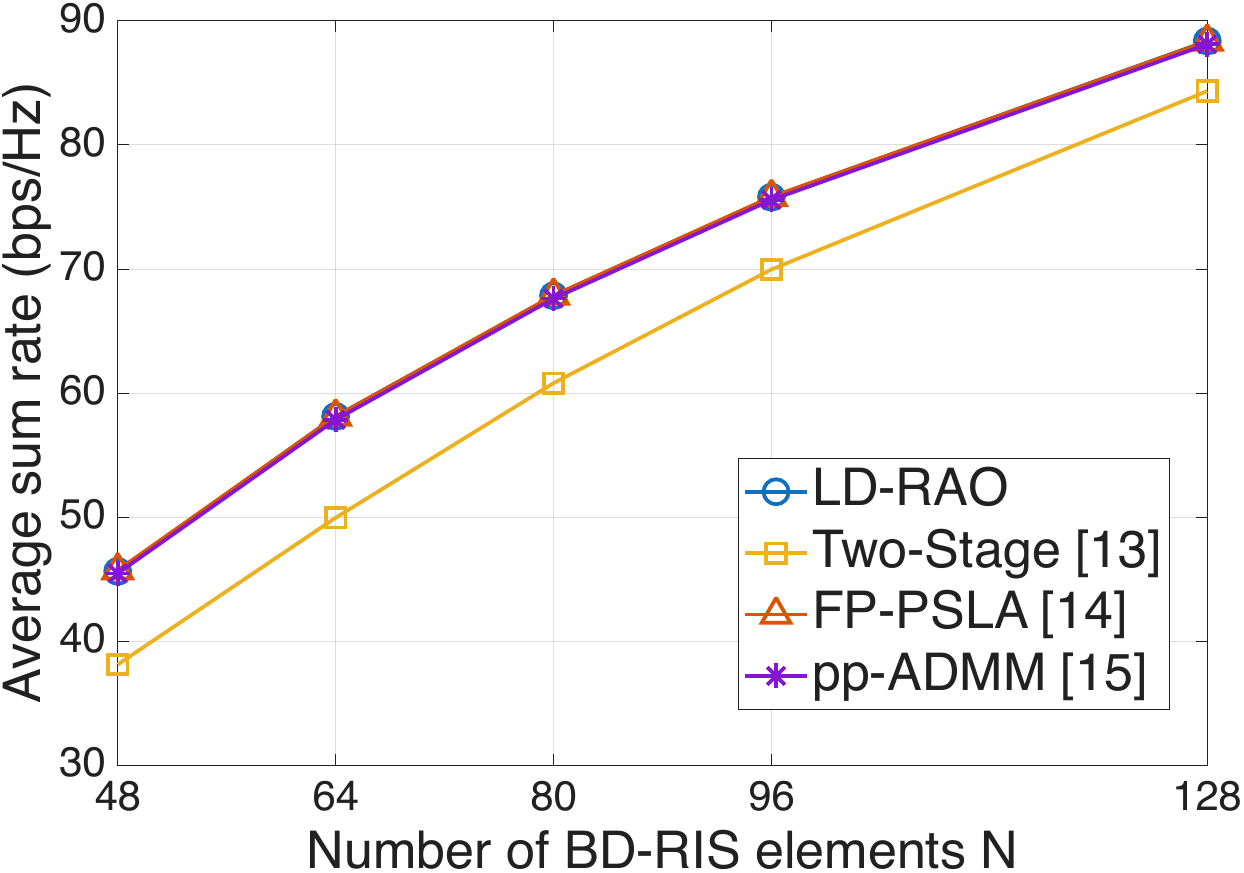}\vspace{-2mm}
	\caption{Average sum rate.}
	\label{fig:objective}\vspace{-2mm}
\end{figure}

\begin{figure}[t]
	\centering
	\includegraphics[width=0.42\textwidth]{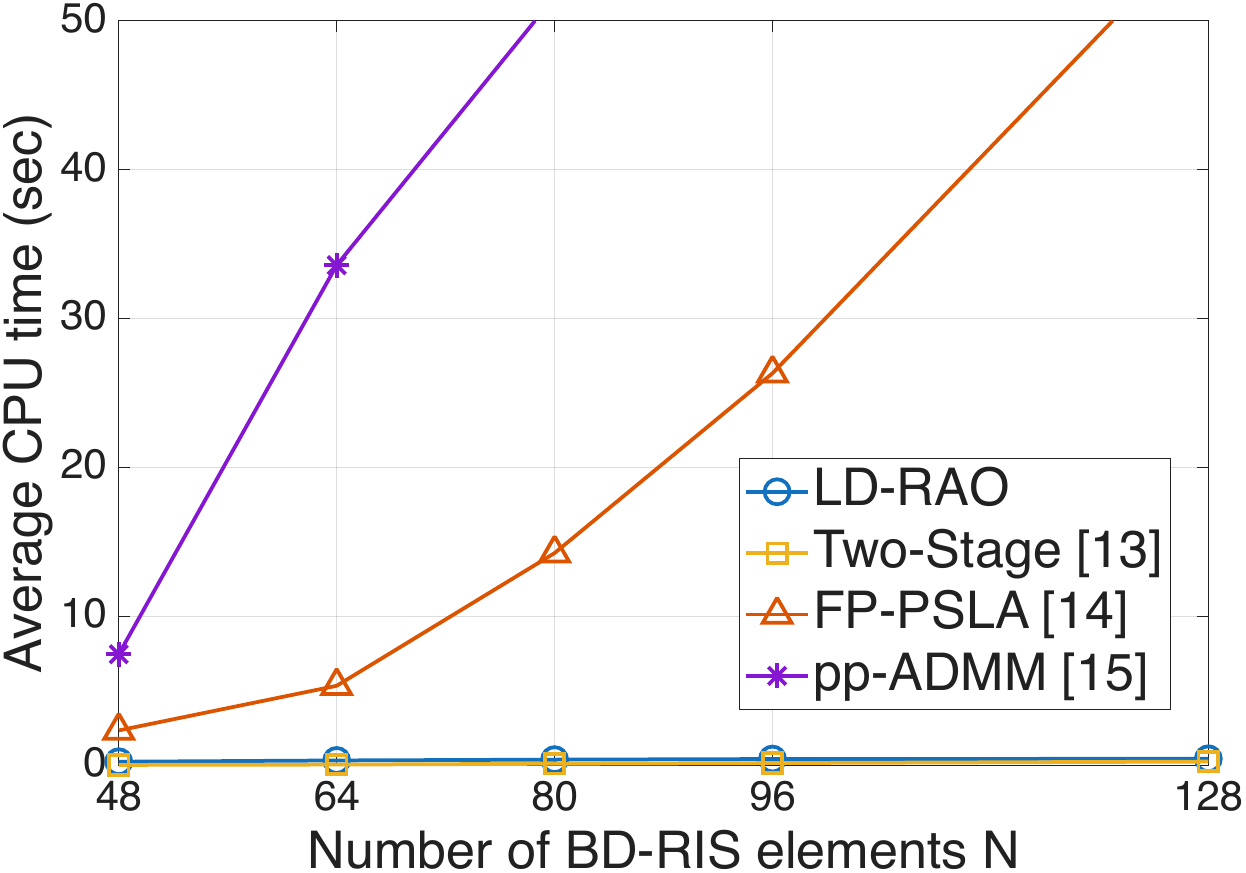}\vspace{-2mm}
	\caption{Average CPU time.}
	\label{fig:time}\vspace{-5mm}
\end{figure}

The simulation setup and initial point construction follow \cite{FPPSLA}, and we compare LD-RAO with Two-Stage~\cite{TwoStage}, FP-PSLA~\cite{FPPSLA}, 
and pp-ADMM~\cite{ZheyuAlgorithm} under the same 200 independent channel realizations with $L=32$, $K=16$, and $P_{\mathrm{T}}=50$~W. 

Figs.~\ref{fig:objective} and~\ref{fig:time} report the average sum rate and CPU time of the considered algorithms, respectively. As shown in Fig.~\ref{fig:objective}, LD-RAO achieves nearly the same sum rate as FP-PSLA and pp-ADMM, while all three consistently outperform the heuristic Two-Stage method. In terms of the CPU time, however, the algorithms exhibit significantly different scaling behaviors as $N$ increases. The CPU time of the proposed LD-RAO grows only slightly with $N$ and remains below $1\,\mathrm{s}$ even at $N=128$, whereas the CPU time of FP-PSLA and pp-ADMM rise sharply. 

To further explain this advantage, as shown in Table~\ref{tab:complexity}, the complexity of LD-RAO scales linearly with $N$, whereas the dominant computational costs of FP-PSLA and pp-ADMM scale cubically with $N$, making these methods increasingly expensive for large-scale BD-RIS, which is consistent with the CPU-time results observed in Fig.~\ref{fig:time}.

To conclude, the proposed LD-RAO approach achieves nearly identical performance to that of the SOTA methods with significantly lower computational complexity. This is attributed to both the low-dimensional formulation and the efficient algorithmic updates.

\bibliographystyle{IEEEbib}
\bibliography{reference.bib}

\end{document}